\title{\color{navyblue} Geographic Difference-in-Discontinuities}
\author{\href{https://kylebutts.com/}{Kyle Butts}\thanks{University of Colorado Boulder, Economics Department (\href{mailto:kyle.butts@colorado.edu}{kyle.butts@colorado.edu})}}
\date{\footnotesize\today}
\definecolor{red}{HTML}{c62828}
\definecolor{orange}{HTML}{ef6c00}
\definecolor{green}{HTML}{2e7d32}
\definecolor{blue}{HTML}{1565c0}
\definecolor{purple}{HTML}{283593}
\definecolor{maroon}{HTML}{AF3335}
\definecolor{dark-maroon}{HTML}{5D0F0D}
\definecolor{teal}{HTML}{00695c}
\definecolor{bluegrey}{HTML}{455a64}
\definecolor{indigo}{HTML}{1A237E}
\definecolor{navyblue}{HTML}{0A3044}
\definecolor{bluegreen}{HTML}{4A8676}
\definecolor{Black}{HTML}{000000}
\definecolor{buff-gold}{HTML}{CFB87C}
\definecolor{buff-grey}{HTML}{565A5C}
\definecolor{buff-lightgrey}{HTML}{A2A4A3}
\definecolor{buff-black}{HTML}{000000}
\newtheoremstyle{spacing}
{}
{}
{}
{}
{\bfseries\color{navyblue}}
{.}
{2.5mm}
{}
\theoremstyle{spacing}
\newtheorem{theorem}{Theorem}
\newtheorem{assumption}{Assumption}
\global\long\def\one{\mathbf{1}}%
\renewenvironment{abstract}
 {\noindent\rule{\linewidth}{.5pt}\noindent}
 {\noindent\rule{\linewidth}{.5pt}}
\bf \color{navyblue}}
\let\oldfootnote\footnote
\renewcommand\footnote[1]{\oldfootnote{\ #1}}
\renewcommand\@makefntext[1]{%
    \parindent 1em \noindent
    \hb@xt@1.8em{\hss\normalfont\@thefnmark.\hfill}#1
  }
\let\input\@@input
\newcolumntype{L}{>{\raggedright\arraybackslash}X}
\newcolumntype{R}{>{\raggedleft\arraybackslash}X}
\newcolumntype{C}{>{\centering\arraybackslash}X}
\begin{document}

\begin{titlepage}
    \maketitle
    
    \begin{abstract}
        {\small
        A recent econometric literature has critiqued the use of regression discontinuities where administrative borders serves as the `cutoff'. Identification in this context is difficult since multiple treatments can change at the cutoff and individuals can easily sort on either side of the border. This note extends the difference-in-discontinuities framework discussed in \citet{Grembi_Nannicini_Troiano_2016} to a geographic setting. The paper formalizes the identifying assumptions in this context which will allow for the removal of time-invariant sorting and compound-treatments similar to the difference-in-differences methodology.
    
        \par
        \noindent{\bf Keywords:} difference-in-discontinuities, spatial econometrics, sorting, compound treatment, causal inference
        \par
        \noindent{\bf JEL Classification Number:} C01, R15, R58
        \par
        }
    \end{abstract}
\end{titlepage}

\section{Introduction}

An increasingly popular estimation strategy involves using administrative borders as cutoffs in a regression discontinuity (RD) setting where the `running variable' is the distance to the border. The purpose of using observations close to the border `cutoff' is to try and better match treated and control units based on unobservable characteristics. Identification using the standard RD continuity assumption is problematic because many laws and institutions change discontinuously (i.e. compound treatment) at the border cutoff and people chose to sort on either side of the borders (i.e. sorting around cutoff), leading to important differences between units in close geographic proximity even in the counterfactual world without treatment.\footnote{Identification through randomization local to the cutoff does not make sense in the geographic context because that would require people to randomly be located on either side of the border.} 

The intuition of the difference-in-discontinuities design is very similar to the difference-in-differences design. A pre-treatment RD identifies time-invariant effects of other laws as well as the discontinuity in outcomes due to time-invariant sorting. A post-treatment RD identifies those previous two discontinuities plus the one caused by the treatment of interest. The difference between the two identifies the treatment effect. In this note, I extend the difference-in-discontinuities identification strategy formalized in \citet{Grembi_Nannicini_Troiano_2016} to the context of geographic RDs and discuss the particular identifying assumptions needed for the above identification sketch to be true when using a geographic RD. 

Figure \ref{fig:example} shows a stylized example of this identification. The left panel shows a discontinuity at the border cutoff that exists prior to treatment. This could be due to other policies changing at the border or sorting due to other reasons. The right panel shows the treated and untreated potential outcomes in the post-period. The key idea behind the diff-in-disc estimator is that the pre-period discontinuity can be estimated and removed from the second period discontinuity to estimate the treatment effect.

\begin{figure}[tb]
    \caption{Example of Diff-in-Disc Identification}
    \label{fig:example}

    \begin{adjustbox}{width = 1\textwidth, center}
        \includegraphics[width=\textwidth]{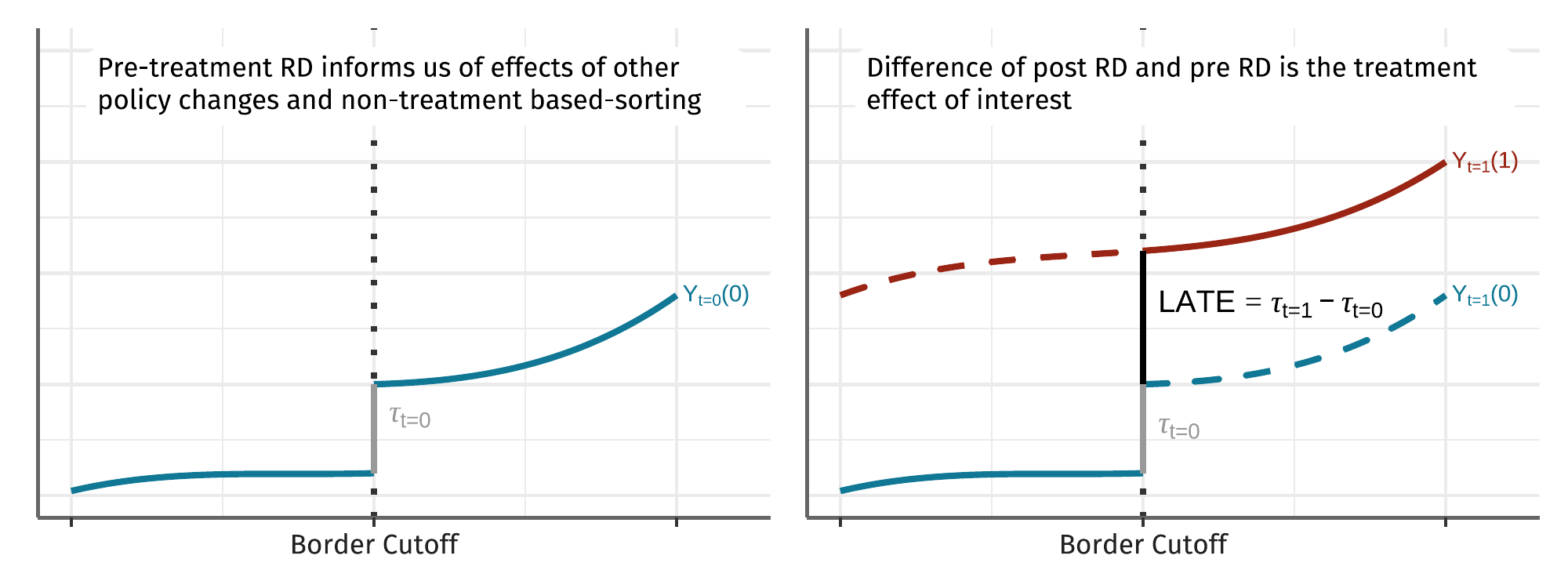}
    \end{adjustbox}

    {\footnotesize \emph{Notes:} This figure shows a stylized example of identification in the diff-in-disc setting.}
\end{figure}

I contribute to the econometric literature on RD in three ways. First, I contribute to the nascent literature formalizing the difference-in-discontinuities identification strategy \citep{Grembi_Nannicini_Troiano_2016,Galindo-Silva_Some_Tchuente_2021,Millan-Quijano_2020}. The results of the previous papers only consider the case of compound treatment where multiple treatment occurs at a cutoff. This paper formalizes the effectiveness of using difference-in-discontinuities to address the problem of time-invariant sorting around the cutoff.

The second contribution is that I extend on the work of \citet{Keele_Titiunik_2015} who formalize identification with geographic RDs into the \emph{geographic} difference-in-discontinuities setting. The authors raise the problem of sorting on either side of the boundary as well as multiple laws changing discontinuously at the boundary and propose stringent assumptions to avoid these problems in the cross-section. This paper uses the  difference-in-discontinuities methodology which provide a solution to these problems under arguably less stringent assumptions by leveraging the panel nature of data to estimate time-invariant sorting and effects of other policy/institution changes.

Last, I show that that estimation of difference-in-discontinuities with panel data can be done by running RD on outcomes that have been first-differenced. This allows for the use of modern advancements in estimation and inference of RDs.\footnote{See \citet{Cattaneo_Idrobo_Titiunik_2019} and \citet{Cattaneo_Idrobo_Titiunik} for an overview of modern techniques. The formulation using first-differences is practically useful as estimation can be done using the suite of RD packages found at \href{https://rdpackages.github.io/}{{https://rdpackages.github.io/}}.} In cases where panel data is not available, then the local regression framework proposed in \citet{Grembi_Nannicini_Troiano_2016} can be used.

\section{Methods}

\subsection{Traditional RD Identification}

Before introducing the difference-in-discontinuities method, I first review geographic RD to highlight difficulties in cross-sectional identification. I consider the standard context a random iid sample of units $i \in \{1, \dots n\}$. There is a running variable $D_i$ that measures distance to the border of a treated area. Without loss of generality, the distance is normalized to zero with positive distances being within the treatment area.\footnote{\citet{Keele_Titiunik_2015} discuss the choice of using a single measure of distance versus a two-dimensional running variable. The difference-in-discontinuity method can be extended into the two-dimensional framework easily, but data will usually render the two-dimensional case implausible.} The observed outcome is modeled by 
$$
    y_i = f(D_i) + \tau(D_i) \one(D_i \geq 0) + \underbrace{X_i \beta + u_i}_{\equiv \varepsilon_i}. 
$$
The function $f(D)$ summarizes location-specific characteristics that affect outcomes. For example, one variable could be proximity to a city and $f(D)$ summarizes its effect on the outcome variable. More generally, $f(D)$ captures amenities and labor markets as they change across space. On the other hand, $\varepsilon_i$ represents the \emph{potentially} unobserved individual-specific characteristics that affect outcome variable. The quantity $Y_i(0) = f(D_i) + \varepsilon_i$ determines the outcome variable in the absense of treatment and $\tau(D_i)$ is the average treatment effect at distance $D_i$. 

Identification of the treatment effect relies on the assumption that location-specific and individual-specific characteristics evolve smoothly across the border:

\begin{assumption}[RD]\label{eq:continuity}\ \\\vspace{-10mm} 
    \begin{itemize}
        \item[(i)] The functions $f(D)$ and $\tau(D)$ are continuous at the cutoff, $D = 0$,
        \item[(ii)] $\mathbb{E}\left[ \varepsilon_i \vert D_i = D \right]$ is continuous at the cutoff, $D = 0$. 
    \end{itemize}
\end{assumption}

Part (i) of (\nameref{eq:continuity}) says that the effect of the running variable on the outcome with and without treatment is continuous at the cutoff and part (ii) says that the effect of other \emph{potentially} unobserved individual-specific variables on the outcome are continuous at the cutoff. In the context of geographic discontinuities, a discontinuity in $f(D)$ could arise from multiple policies changing at the border and a discontinuity in $\mathbb{E}\left[ \varepsilon_i \vert D_i = D \right]$ could arise from sorting across the border \citep{Keele_Titiunik_2015}. These two problemsrepresent a central threat to identification of treatment effects in the geographic RD setting. 

If the two continuity assumptions are satisfied, observations in the control area close to the border identify the limiting value of $f(0)$ and observations in the treated area close to the border identify the limiting value of $f(0) + \tau(0)$. The difference between these two limits identify $\tau(0)$. Formally, for a variable $z$, the left and right limits at the cutoff are denoted $z^+ \equiv \lim_{D_i \to 0^+} z_i$ and $z^- \equiv \lim_{D_i \to 0^-} z_i$. With (\nameref{eq:continuity}), it is easy to show that the RD estimate identifies the treatment effect, i.e. $\tau = y^+ - y^-$.\footnote{See Theorem 1 in \citet{Hahn_Todd_Klaauw_2001}.} 

\begin{theorem}[RD Identification]
    Under (\nameref{eq:continuity}), $\tau(0) = y^+ - y^-$.
\end{theorem}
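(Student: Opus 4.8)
The plan is to take conditional expectations of the outcome equation and then pass to one-sided limits at the cutoff, reading $y^+$ and $y^-$ as the right- and left-hand limits of the regression function $\condexpec{y_i}{D_i = D}$ as $D \to 0$. First I would apply the conditional expectation operator to the structural equation for $y_i$, which by linearity gives
$$
    \condexpec{y_i}{D_i = D} = f(D) + \tau(D)\,\one(D \geq 0) + \condexpec{\varepsilon_i}{D_i = D},
$$
since $f(D)$ and $\tau(D)\,\one(D \geq 0)$ are deterministic functions of the conditioning value $D$ and pass through the expectation.

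Next I would evaluate the two one-sided limits separately. Taking $D \to 0^+$, the indicator equals one, so continuity of $f$ and $\tau$ from part (i) of (\nameref{eq:continuity}) yields $y^+ = f(0) + \tau(0) + \condexpec{\varepsilon_i}{D_i = 0^+}$. Taking $D \to 0^-$, the indicator equals zero, so the same continuity gives $y^- = f(0) + \condexpec{\varepsilon_i}{D_i = 0^-}$. Subtracting, the common $f(0)$ terms cancel and I obtain
$$
    y^+ - y^- = \tau(0) + \left( \condexpec{\varepsilon_i}{D_i = 0^+} - \condexpec{\varepsilon_i}{D_i = 0^-} \right).
$$
To finish, I would invoke part (ii) of (\nameref{eq:continuity}): since $\condexpec{\varepsilon_i}{D_i = D}$ is continuous at $D = 0$, its left and right limits both equal $\condexpec{\varepsilon_i}{D_i = 0}$, so the parenthesized difference vanishes and $y^+ - y^- = \tau(0)$, as claimed.

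The argument is essentially mechanical once the notation is pinned down, so there is no deep obstacle. The one point requiring care is the interpretation of $y^+$ and $y^-$: because $y_i$ is random, these symbols must be read as limits of the \emph{conditional mean} function rather than of $y_i$ itself, and the existence of those one-sided limits is precisely what the continuity hypotheses on $f$, $\tau$, and $\condexpec{\varepsilon_i}{D_i = D}$ guarantee. The only substantive modeling content is that the indicator switches from $0$ to $1$ exactly at the cutoff, which is what isolates $\tau(0)$ in the difference of limits; everything else is cancellation.
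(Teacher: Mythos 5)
Your proof is correct: the paper itself does not write out a proof of this theorem (it defers to Theorem 1 of Hahn, Todd, and van der Klaauw), but your argument — take conditional expectations, split the one-sided limits, use continuity of $f$ and $\tau$ to cancel $f(0)$ and isolate $\tau(0)$, then use part (ii) to kill the $\condexpec{\varepsilon_i}{D_i = D}$ difference — is exactly the standard argument and mirrors the decomposition-and-cancellation proof the paper gives for its Diff-in-Disc theorem. Your care in reading $y^+$ and $y^-$ as limits of the conditional mean function (rather than of the random variable $y_i$, as the paper's informal notation $z^+ \equiv \lim_{D_i \to 0^+} z_i$ suggests) is the right interpretation and tightens a loose point in the paper's notation.
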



\subsection{Difference-in-Discontinuities Identification}

Now we turn to the panel setting where we observe outcomes before and after treatment occurs, $t \in \{0,1\}$. We observe a random iid panel sample of $\left\{ (y_{i0}, y_{i1}, D_i) \right\}_{i=1}^n$. In this setting, discontinuities at the border before treatment informs us on the affects of other treatments and time-invariant sorting. Potential outcomes for individual $i$ at time $t$ are now modeled as 
\begin{equation}\label{eq:panel_model}
    y_{it} = f_t(D_i) + \gamma(D_i) \one(D_i \geq 0) + \tau(D_i) \one(D_i \geq 0) \one(t = 1) + \varepsilon_{it},
\end{equation}
where $\gamma(D)$ represent the time-invariant discontinuity at the cutoff which could be due to time-invariant sorting and/or the effects of other policies that change at the border; the untreated location-specific component, $f_t(D)$, can very across periods; and $\tau(D)$ remains the treatment effect of interest. 

The assumptions necessary to identify the treatment effect $\tau(0)$ requires the traditional RD assumptions to hold in both periods.

\begin{assumption}[Diff-in-Disc]\label{eq:continuity_panel}\ \\\vspace{-10mm} 
    \begin{itemize}
        \item[(i)] The functions $f_0(D), f_1(D),$ and $\tau(D)$ are continuous at $D = 0$, 
        \item[(ii)] $\mathbb{E}\left[ \varepsilon_{it} \vert D_i = D \right]$ is continuous at the cutoff, $D = 0$, for $t \in \{0, 1\}$. 
    \end{itemize}
\end{assumption}

These assumptions warrant a bit of discussion. Note that the continuity assumption on $f_0$ is mild because discontinuities from compound treatment and time-invariatn sorting are allowed in $\gamma(D)$. This is an improvement over traditional geographic RDs which require these effects to not be present \citep{Keele_Titiunik_2015}.\footnote{This is similar to the difference-in-differences methodology that allows for time-invariant differences in levels.} In the post period, (\nameref{eq:continuity_panel}) requires two things. First, $f_1(D)$ being continuous requires that no other policies turn on between periods that would cause a discontinuity at the border. Second, it requires that the effects of previous policies were already fully developed in period 0. If the effects of other policies change over time, then the changes in effects over time would not be absorbed by $\gamma(D)$ and would cause a discontinuity in $f_1(D)$ that would be mistaken as the treatment effect. Second, part (ii) requires that no additional sorting can occur between 0 and 1, whether that be due to treatment or lagged sorting from previous treatments.

To help with estimation of the treatment effect, we can reformulate our potential outcomes in a first-difference model, $$
    (y_1 - y_0) = (f_1(D_i) - f_0(D_i)) + \tau(D_i)\one(D_i \geq 0) + (\varepsilon_{i1} - \varepsilon_{i0}),
$$
where $\gamma(D_i)$ cancels out because it is time-invariant. 

\begin{theorem}[Diff-in-Disc Identification]
    Under (\nameref{eq:continuity_panel}), $\tau(0) = (y_1 - y_0)^+ - (y_1 - y_0)^-$.
\end{theorem}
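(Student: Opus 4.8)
The plan is to reduce the claim to Theorem 1 (RD Identification) by treating the first-differenced outcome as a fresh cross-sectional RD. Writing $\Delta y_i \equiv y_{i1} - y_{i0}$, the first-difference model displayed just above the statement reads
$$
    \Delta y_i = \tilde{f}(D_i) + \tau(D_i)\one(D_i \geq 0) + \tilde{\varepsilon}_i,
$$
where $\tilde{f}(D) \equiv f_1(D) - f_0(D)$ and $\tilde{\varepsilon}_i \equiv \varepsilon_{i1} - \varepsilon_{i0}$. This has exactly the structure of the single-period model from Section 2.1, with $\tilde{f}$ in the role of the location-specific function and $\tilde{\varepsilon}_i$ in the role of the individual-specific error. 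The crucial feature is that the time-invariant discontinuity $\gamma(D_i)\one(D_i \geq 0)$ has already dropped out of $\Delta y_i$ because it enters both periods identically, so it plays no part in what follows.

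First I would verify that the transformed model satisfies (\nameref{eq:continuity}). For part (i), $\tilde{f} = f_1 - f_0$ is continuous at $D = 0$ as the difference of two functions continuous at the cutoff by (\nameref{eq:continuity_panel})(i), and $\tau(D)$ is continuous there by the same assumption. For part (ii), linearity of conditional expectation gives $\condexpec{\tilde{\varepsilon}_i}{D_i = D} = \condexpec{\varepsilon_{i1}}{D_i = D} - \condexpec{\varepsilon_{i0}}{D_i = D}$, which is continuous at $D = 0$ as the difference of the two conditional-mean functions that (\nameref{eq:continuity_panel})(ii) makes continuous for $t \in \{0,1\}$. Hence $\Delta y_i$ obeys the hypotheses of Theorem 1.

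Applying Theorem 1 (RD Identification) to $\Delta y_i$ then yields $\tau(0) = (\Delta y)^+ - (\Delta y)^- = (y_1 - y_0)^+ - (y_1 - y_0)^-$, which is the claim. Concretely, the right limit ($D \to 0^+$, where the indicator equals one) picks up $\tilde{f}(0) + \tau(0)$ plus the limiting error conditional mean, the left limit ($D \to 0^-$) picks up $\tilde{f}(0)$ plus the same limiting error conditional mean, and the $\tilde{f}(0)$ terms and error terms cancel on differencing. The step requiring the most care is conceptual rather than computational: one must be explicit that the $\pm$ limits are taken on conditional means of $\Delta y_i$, and one must track that the only property used about $\gamma(D)$ is its time-invariance, \emph{not} its continuity. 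That observation is precisely what lets this design tolerate the compound-treatment and time-invariant-sorting discontinuities that defeat the cross-sectional RD of \citet{Keele_Titiunik_2015}.
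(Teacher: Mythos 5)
Your proof is correct and is in substance the paper's own argument: both start from the first-differenced model in which $\gamma(D)$ cancels, and both rely on exactly the same three continuity facts --- for $f_1 - f_0$, for $\tau$, and for the conditional means of $\varepsilon_{it}$ --- to make every non-treatment term vanish at the cutoff. The only difference is packaging: you delegate the final limit algebra to Theorem 1 after verifying its hypotheses for the transformed outcome $\Delta y_i$, whereas the paper writes that algebra out directly (incidentally with a harmless sign slip on the $f_0$ and $\varepsilon_0$ terms, which are zero by continuity in any case), so no gap separates the two proofs.
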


\begin{proof}
    \begin{align*}
        (y_1 - y_0)^+ - (y_1 - y_0)^- &= \tau(D)^{+} + (f_1 - f_0)^+  + (\varepsilon_{i1} - \varepsilon_{i0})^+ - ((f_1 - f_0)^-  + (\varepsilon_{i1} - \varepsilon_{i0})^-) \\
        &= \tau(0) + (f_1^+ - f_1^-) + (f_0^+ - f_0^-) + (\varepsilon_1^+ - \varepsilon_1^-) + (\varepsilon_0^+ - \varepsilon_0^-) \\
        &= \tau(0),
    \end{align*}
    where the second equality comes from continuity of $\tau(D)$ and the last equality comes from the two continuity assumptions.
\end{proof}

The above theorem says that so long as sorting and other policies are fully observed in the pre-period, a regression discontinuity estimated on a first-differenced outcome will identify the treatment effect. This theorem is closely related to \citet{Grembi_Nannicini_Troiano_2016} but differs in an important way. First, they find that $\tau(0) = (y_1^+ - y_1^-) - (y_0^+ - y_0^-)$ which does not require panel data. In cases of panel data, formulating the above identification result in terms of first differences is advantageous. Since $(y_1 - y_0)^+ - (y_1 - y_0)^-$ is a standard RD estimate of the difference between the right and left limits, this  unlocks the wide set of econometric tools used in RD estimation including local polynomial regression, data-driven bandwidth selection, and bias-corrected inference. \citet{Cattaneo_Idrobo_Titiunik_2019} and \citet{Cattaneo_Idrobo_Titiunik} provide a literature review of the modern RD literature and include a set of R and Stata programs containing powerful estimation tools.

In the non-panel case, estimation can proceed in a local-polynomial regression framework as proposed by \citet{Grembi_Nannicini_Troiano_2016}. They recommend running the following regression using observations within a small interval around $D_i = 0$: 
\begin{align*}
    Y_{it} &= \delta_0 + \delta_1 * D_{i} + \one(D_i \geq 0) \left( \gamma_0 + \gamma_1 D_i \right) + \\ 
    &\quad \one(t = 1) \left( \alpha_0 + \alpha_1 * D_{i} + \one(D_i \geq 0) \left( \beta_0 + \beta_1 D_i \right) \right] + \eta_{it}.
\end{align*}
From standard regression results, $\beta_0$, will be the difference-in-discontinuities estimate. This estimation strategy, however, does not as easily allow for the use of modern bias-robust estimators.

\section{Discussion}

This paper extended the difference-in-discontinuities framework proposed by \citet{Grembi_Nannicini_Troiano_2016} into the context of geographic discontinuities. This setting faces the same problem of compound treatment that other RD contexts exhibit and since individuals can sort sort across the border, this context provides additional difficulties. This paper formalizes the necessary assumptions in the \emph{geographic} context in order to identify the treatment effect of a policy. Moreover, in the presence of panel data, this paper proposes improved estimation techniques by recasting the estimator as a RD estimator on first-differenced data.

\setlength{\bibsep}{0.0pt}
\bibliography{references.bib}

\end{document}